\newtheorem{definition}{Definition}
\newtheorem{proposition}{Proposition}
\newtheorem{lemma}{Lemma}
\newtheorem{corollary}{Corollary}
\tikzset{
  c_node/.style={font=\tiny},
  u_node/.style={rectangle,draw,minimum size=0.3cm,inner sep=5pt},
  v_node/.style={circle,draw,minimum size=0.3cm,inner sep=3pt},
  ua_label/.style={above, sloped, inner sep=2pt, very near start},
  va_label/.style={above, sloped, inner sep=2pt, very near end},
  ub_label/.style={below, sloped, inner sep=2pt, very near start},
  vb_label/.style={below, sloped, inner sep=2pt, very near end}  
}
\begin{document}

\begin{frontmatter}
\title{An Improved Subsumption Testing Algorithm for the Optimal-Size Sorting Network Problem}

\author{Cristian Fr\u asinaru and M\u ad\u alina R\u aschip}
\address{Faculty of Computer Science, "Alexandru Ioan Cuza" University, Ia\c si, Romania}
\ead{\{acf, mionita\}@info.uaic.ro}


\begin{abstract}
In this paper a new method for checking the subsumption relation for the optimal-size sorting network problem is described.
The new approach is based on creating a bipartite graph and modelling the subsumption test as the problem of enumerating all perfect matchings in this graph. 
Experiments showed significant improvements over the previous approaches when considering the number of subsumption checks and the time needed to find optimal-size sorting networks. We were able to generate all the complete sets of filters for comparator networks with $9$ channels, 
confirming that the $25$-comparators sorting network is optimal. 
The running time was reduced more than $10$ times, compared to the state-of-the-art result described in \cite{codish:2014:9-25}.
\end{abstract}
\begin{keyword}
Comparator networks. Optimal-size sorting networks. Subsumption.
\end{keyword} 

\end{frontmatter}

\section{Introduction}
Sorting networks are a special class of sorting algorithms with an active research area since the 1950's 
\cite{knuth:1998:acp3}, \cite{batcher:1968}, \cite{baddar:2011}.
A sorting network is a comparison network which for every input sequence produces a monotonically increasing output. 
Since the sequence of comparators does not depend on the input, the network represents an oblivious sorting algorithm. 
Such networks are suitable in parallel implementations of sorting, being applied in graphics processing units \cite{kipfer:2005} 
and multiprocessor computers \cite{batcher:1968}. 

Over time, the research was focused on finding the optimal sorting networks relative to their size or depth.
When the size is considered, the network must have a minimal number of comparators, while for the second objective a minimal number of layers is required. In \cite{ajtai:1983} a construction method for sorting network of size $O(n log n)$ and depth $O(log n)$ is given. This algorithm has good results in theory but it is inefficient in practice because of the large constants hidden in the big-$O$ notation. On the other side, the simple algorithm from \cite{batcher:1968} which constructs networks of depth $O(log^2 n)$ has good results for practical values of $n$.

Because optimal sorting networks for small number of inputs can be used to construct efficient larger networks the research in the area focused in the last years on finding such small networks. Optimal-size and optimal-depth networks are known for $n \leq 8$ \cite{knuth:1998:acp3}. 
In \cite{parberry:1991:9input} the optimal-depth sorting networks were provided for $n=9$ and $n=10$. 
The results were extended for $11 \leq n \leq 16$ in \cite{bundala:2013:optimal}. The approaches use search with pruning based on symmetries on the first layers. The last results for parallel sorting networks are for 17 to 20 inputs and are given in \cite{ehlers:2015}, \cite{codish:2015:end}. On the other side, the paper \cite{codish:2014:9-25} proved the optimality in size for the case $n=9$ and $n=10$. The proof is based on exploiting symmetries in sorting networks and on encoding the problem as a satisfiability problem. 
The use of powerful modern SAT solvers to generate optimal sorting networks is also investigated in \cite{morgenstern:2011}. Other recent results can be found in \cite{codish:2014:quest}, where a revised technique to generate, modulo symmetry, the set of saturated two-layer comparator networks is given. Finding the minimum number of comparators for $n>10$ is still an open problem.  In this paper, we consider the optimal-size sorting networks problem.

Heuristic approaches were also considered in literature, for example approaches based on evolutionary algorithms \cite{valsalam:2013} 
that are able to discover new minimal networks for up to $22$ inputs, but these methods cannot prove their optimality.

One of the most important and expensive operation used in \cite{codish:2014:9-25} is the subsumption testing. This paper presents a new better approach to implement this operation based on matchings in bipartite graphs. The results show that the  new approach makes the problem more tractable by scaling it to larger inputs.

The paper is organized as follows. Section 2 describes the basic concepts needed to define the optimal-size sorting-network problem and a new model of the subsumption problem. Section 3 presents the problem of finding the minimal-size sorting network. Section 4 discusses the subsumption problem while Section 5 the subsumption testing. Section 6 presents the new way of subsumption testing by enumerating all perfect matchings. Section 7 describes the experiments made to evaluate the approach and presents the results.

\section{Basic Concepts}        

A {\it comparator network} $C_{n,k}$ with $n$ {\it channels} (also called {\it wires}) and {\it size} $k$ is a sequence of {\it comparators} 
$c_1=(i_1,j_1);\dots;c_k=(i_k; j_k)$ where each comparator $c_t$ specifies a pair of channels $1 \le i_t < j_t \le n$.
We simply denote by $C_n$ a comparator network with $n$ channels, whenever the size of the network is not significant in a certain context.

Graphically, a comparator network may be represented as a Knuth diagram \cite{knuth:1998:acp3}. 
A channel is depicted as a horizontal line and a comparator as a vertical segment connecting two channels.

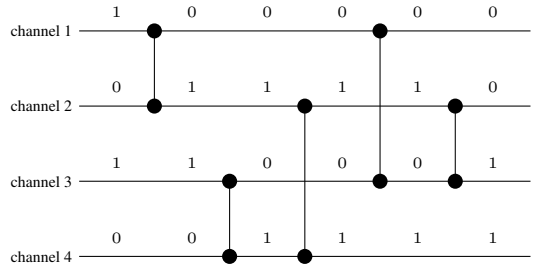
\begin{figure}[!ht]
\centering
\captionsetup{justification=centering}
\begin{tikzpicture}
\begin{scope}
   \node[c_node] at (-0.5,3) {channel 1};
   \node[c_node] at (-0.5,2) {channel 2};
   \node[c_node] at (-0.5,1) {channel 3};
   \node[c_node] at (-0.5,0) {channel 4};

   \draw (0,0) -- (6,0);
   \draw (0,1) -- (6,1);
   \draw (0,2) -- (6,2);
   \draw (0,3) -- (6,3);

   \fill (1,3) circle (1mm);
   \fill (1,2) circle (1mm);
   \draw (1,3) -- (1,2) node [font=\scriptsize, midway, left, inner sep=0pt] {}; 
   
   \fill (2,0) circle (1mm);
   \fill (2,1) circle (1mm);
   \draw (2,0) -- (2,1) node [font=\scriptsize, midway, left, inner sep=0pt] {}; 

   \fill (3,2) circle (1mm);
   \fill (3,0) circle (1mm);
   \draw (3,2) -- (3,0) node [font=\scriptsize, midway, left, inner sep=0pt] {}; 

   \fill (4,3) circle (1mm);
   \fill (4,1) circle (1mm);
   \draw (4,3) -- (4,1) node [font=\scriptsize, midway, left, inner sep=0pt] {}; 

   \fill (5,2) circle (1mm);
   \fill (5,1) circle (1mm);
   \draw (5,2) -- (5,1) node [font=\scriptsize, midway, left, inner sep=0pt] {}; 
   
   \node[c_node] at (0.5,3.25) {$1$};
   \node[c_node] at (0.5,2.25) {$0$};
   \node[c_node] at (0.5,1.25) {$1$};
   \node[c_node] at (0.5,0.25) {$0$};

   \node[c_node] at (1.5,3.25) {$0$};
   \node[c_node] at (1.5,2.25) {$1$};
   \node[c_node] at (1.5,1.25) {$1$};
   \node[c_node] at (1.5,0.25) {$0$};

   \node[c_node] at (2.5,3.25) {$0$};
   \node[c_node] at (2.5,2.25) {$1$};
   \node[c_node] at (2.5,1.25) {$0$};
   \node[c_node] at (2.5,0.25) {$1$};

   \node[c_node] at (3.5,3.25) {$0$};
   \node[c_node] at (3.5,2.25) {$1$};
   \node[c_node] at (3.5,1.25) {$0$};
   \node[c_node] at (3.5,0.25) {$1$};

   \node[c_node] at (4.5,3.25) {$0$};
   \node[c_node] at (4.5,2.25) {$1$};
   \node[c_node] at (4.5,1.25) {$0$};
   \node[c_node] at (4.5,0.25) {$1$};
   
   \node[c_node] at (5.5,3.25) {$0$};
   \node[c_node] at (5.5,2.25) {$0$};
   \node[c_node] at (5.5,1.25) {$1$};
   \node[c_node] at (5.5,0.25) {$1$};
   
\end{scope}

\end{tikzpicture}

\caption{The sorting network $C=(1,2);(3,4);(2,4);(1,3);(2,3)$, having $4$ channels and $5$ comparators, operating on the input sequence $1010$.
The output sequence is $0011$.}
\label{fig:aug-edges1}
\end{figure}

An {\it input} to a comparator network $C_n$ may be any sequence of $n$ objects taken from a totally ordered set, for instance elements in $\mathbb{Z}^n$.
Let $\overline{x}=(x_1,\dots,x_n)$ be an input sequence. Each value $x_i$ is assigned to the channel $i$ and it will "traverse" the comparator network
from left to right. Whenever the values on two channels reach a comparator $c=(i,j)$ the following happens:
if they are not in ascending order the comparator permutes the values $(x_i,x_j)$, otherwise the values will pass through the comparator unmodified.
Therefore, the {\it output} of a comparator network is always a permutation of the input. 
If $\overline{x}$ is an input sequence, we denote by $C(\overline{x})$ the output sequence of the network $C$.

A comparator network is called a {\it sorting network} if its output is sorted ascending for every possible input.

The {\it zero-one principle} \cite{knuth:1998:acp3} states that if a comparator network $C_n$ sorts correctly all $2^n$ sequences of zero and one, 
then it is a sorting network. Hence, without loss of generality, from now on we consider only comparator networks with binary input sequences.
In order to increase readability, whenever we represent a binary sequence we only write its bits; so $1010$ is actually the sequence $(1,0,1,0)$.

The {\it output set} of a comparator network is $outputs(C)=\{C(\overline{x}) | \forall \overline{x} \in \{0,1\}^n\}$.
Let $\overline{x}$ be a binary input sequence of length $n$. 
We make the following notations: 
$zeros(\overline{x})=\{1 \le i \le n | x_i = 0\}$ and 
$ones(\overline{x})=\{1 \le i \le n | x_i = 1\}$.
The output set of a comparator network $C_n$ can be partitioned into $n+1$ {\it clusters}, 
each cluster containing sequences in $outputs(C)$ having the same number of ones.
We denote by $cluster(C,p)$ the cluster containing all sequences having $p$ ones: \\
$cluster(C,p) = \{ \overline{x} \in outputs(C) \;|\; |ones(\overline{x})| = p \}$.

Consider the following simple network $C=(1,2);(3,4)$.
The output clusters of $C$ are:
$cluster(C,0)=\{0000\}$, 
$cluster(C,1)=\{0001,0100\}$, 
$cluster(C,2)=\{0011,0101,1100\}$, 
$cluster(C,3)=\{0111,1101\}$,
$cluster(C,4)=\{1111\}$.

The following proposition states some simple observations regarding the output set and its clusters.
\begin{proposition}\label{prop1}
Let $C$ be a comparator network having $n$ channels.
\begin{itemize}
\item [(a)] $C$ is the empty network $\Leftrightarrow$ $|outputs(C)|=2^n$.
\item [(b)] $C$ is a sorting network $\Leftrightarrow$ $|outputs(C)|=n+1$ (each cluster contains exactly one element). 
\item [(c)] $|cluster(C,p)| \le \binom{n}{p}$, $1 \le p \le n-1$.
\item [(d)] $|cluster(C,0)|=|cluster(C,n)|=1$.
\end{itemize}
\end{proposition}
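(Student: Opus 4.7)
The plan is to verify each of the four claims using two basic properties of comparators: first, every comparator $c=(i,j)$ preserves the multiset of values on channels $i$ and $j$, hence the total number of ones in the sequence; second, for binary input, a comparator acts as the identity whenever $x_i \le x_j$, and otherwise maps $(x_i,x_j)=(1,0)$ to $(0,1)$.

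For (a), the forward direction is immediate: an empty network is the identity, so every $\overline{x} \in \{0,1\}^n$ is an output. For the converse I argue by contrapositive. A single comparator is a non-injective self-map of $\{0,1\}^n$, since both $(0,1)$ and $(1,0)$ on the pair $(i,j)$ are mapped to $(0,1)$. The function computed by $C$ is the composition of its comparator maps, and composing a non-injective function with any function yields a non-injective function; hence $|outputs(C)|<2^n$ whenever $C$ has at least one comparator.

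For (b) forward, each cluster is invariant under $C$ because comparators preserve the number of ones, so if $C$ sorts every input then $0^{n-p}1^p$ is the only element of $cluster(C,p)$, giving $|outputs(C)|=n+1$. The reverse direction is the one step I consider delicate. I would first observe that each cluster is non-empty (the input $0^{n-p}1^p$ already has weight $p$), so $|outputs(C)| \ge n+1$, with equality iff every cluster is a singleton. To identify this singleton, I would use the key fact that the string $0^{n-p}1^p$ satisfies $x_i \le x_j$ on every comparator $(i,j)$ with $i<j$, so it is fixed by $C$; it must therefore be the unique element of $cluster(C,p)$, so every input with $p$ ones is sorted to this string and $C$ is a sorting network.

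Parts (c) and (d) are short. For (c), by definition $cluster(C,p)$ consists of binary strings of length $n$ with exactly $p$ ones, of which there are $\binom{n}{p}$. For (d), $0^n$ and $1^n$ are the unique binary strings of weight $0$ and $n$ respectively, and both are fixed by every comparator since $x_i=x_j$ on the two compared channels; hence $cluster(C,0)=\{0^n\}$ and $cluster(C,n)=\{1^n\}$. The main obstacle in the whole argument, modest as it is, is identifying the distinguished fixed string in each cluster for the reverse direction of (b); once that string is produced, the remaining steps reduce to counting.
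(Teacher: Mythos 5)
The paper states this proposition without proof, treating all four items as ``simple observations,'' so there is no argument of its own to compare against; your write-up is a correct and complete filling-in of exactly the elementary counting and fixed-point facts the authors had in mind. The only point worth a parenthetical in (a) is that ``composing a non-injective function with any function yields a non-injective function'' holds when the non-injective map is applied \emph{first} ($g \circ f$ with $f$ non-injective), which is indeed the situation for the first comparator of the network, so your argument goes through.
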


We extend the $zeros$ and $ones$ notations to output clusters in the following manner.
Let $C$ be a comparator network. 
For all $0 \le p \le n$ we denote $zeros(C,p) = \bigcup \{zeros(\overline{x}) | \overline{x} \in cluster(C,p)\}$ and  
$ones(C,p) = \bigcup \{ones(\overline{x}) | \overline{x} \in cluster(C,p)\}$.
These sets contain all the positions between $1$ and $n$ for which there is at least one sequence in the cluster having a zero, respectively an one, 
set at that position.
Considering the clusters from the previous example, we have:
$zeros(C,0) = zeros(C,1) = zeros(C,2)=\{1,2,3,4\}$, $zeros(C,3)=\{1,3\}$, $zeros(C,4)=\emptyset$,
$ones(C,0) = \emptyset$, $ones(C,1) = \{2,4\}$, $ones(C,2)=ones(C,3)=ones(C,4)=\{1,2,3,4\}$.

We introduce the following equivalent representation of the $zeros$ and $ones$ sets, as a sequence of length $n$, 
where $n$ is the number of channels of the network, and elements taken from the set $\{0, 1\}$.
Let $\Gamma$ be a cluster:
\begin{itemize}
\item $\overline{zeros}(\Gamma) = (\gamma_1,\dots,\gamma_n)$, where $\gamma_i=0$ if $i \in zeros(\Gamma)$, otherwise  $\gamma_i=1$,
\item $\overline{ones}(\Gamma) = (\gamma'_1,\dots,\gamma'_n)$, where $\gamma'_i=1$ if $i \in ones(\Gamma)$, otherwise  $\gamma'_i=0$.
\end{itemize}
In order to increase readability, we will depict $1$ values in $\overline{zeros}$, respectively $0$ values in $\overline{ones}$ 
with the symbol $*$.
Considering again the previous example, we have:
$\overline{zeros}(C,3)=(0*0*)$ and $\overline{ones}(C,1) = (*1*1)$.

If $C$ is a comparator network on $n$ channels and $1 \le i < j \le n$ we denote by $C;(i,j)$ the {\it concatenation} of $C$ and $(i,j)$,
i.e. the network that has all the comparators of $C$ and in addition a new comparator connecting channels $i$ and $j$.
The concatenation of two networks $C$ and $C'$ having the same number of channels is denoted by $C;C'$ and 
it is defined as the sequence of all comparators in $C$ and $C'$, first the ones in $C$ and then the ones in $C'$.
In this context, $C$ represents a {\it prefix} of the network $C;C'$.
Obviously, $size(C;C')=size(C) + size(C')$.

Let $\pi$ be a permutation on $\{1,\dots,n\}$. Applying $\pi$ on a comparator network $C=(i_1,j_1);\dots;(i_k,j_k)$ will produce 
the {\it generalized} network $\pi(C)=(\pi(i_1),\pi(j_1));\dots;(\pi(i_k),\pi(j_k))$. 
It is called generalized because it may contain comparators $(i,j)$ with $i>j$, 
which does not conform to the actual definition of a standard comparator network.
An important result in the context of analyzing sorting networks (exercise 5.3.4.16 in \cite{knuth:1998:acp3}) states that 
a generalized sorting network can always be {\it untangled} such that the result is a standard sorting network of the same size.
The untangling algorithm is described in the previously mentioned exercise.
Two networks $C_a$ and $C_b$ are called {\it equivalent} if there is a permutation $\pi$ such that untangling $\pi(C_b)$ results in $C_a$.

Applying a permutation $\pi$ on a binary sequence $\overline{x}=(x_1,\dots,x_n)$ will permute the corresponding values:
$\pi(\overline{x}) = (x_{\pi(1)},\dots,x_{\pi(n)})$.
Applying $\pi$ on a set of sequences $S$ (either a cluster or the whole output set) will permute the values of all the sequences in the set:
$\pi(S) = \{\pi(\overline{x}) | \forall \overline{x} \in S \}$.
For example, consider the permutation $\pi=(4,3,2,1)$ and the set of sequences $S=\{0011,0101,1100\}$.
Then, $\pi(S)=\{1100,1010,0011\}$

\section{Optimal-size sorting networks} 
The {\it optimal size problem} regarding sorting networks is: 
"Given a positive integer $n$, what is the minimum number of comparators $s_n$ needed to create a sorting network on $n$ channels?".

Since even the problem of verifying whether a comparator network is a sorting network is known to be Co-$\mathcal{NP}$ complete \cite{parberry:1991:complexity},
we cannot expect to design an algorithm that will easily answer the optimal size problem. On the contrary.

In order to prove that $s_n \le k$, for some $k$, it is enough to find a sorting network of size $k$. 
On the other hand, to show that $s_n > k$ one should prove that no network on $n$ channels having at most $k$ comparators is a sorting network.

Let $R^{n}_{k}$ denote the set of all comparator networks having $n$ channels and $k$ comparators.
The naive approach to identify the sorting networks is by generating the whole set $R^n_k$, 
starting with the empty network and adding all possible comparators.
In order to find a sorting network on $n$ channels of size $k$, 
one could iterate through the set $R^n_k$ and inspect the output set of each network.
According to proposition \ref{prop1} (b), if the size of the output is $n+1$ then we have found a sorting network.
If no sorting network is found, we have established that $s_n > k$.

Unfortunately, the size of $R^n_k$ grows rapidly since $|R^{n}_{k}|= (n(n-1)/2)^k$ 
and constructing the whole set $R^n_k$ is impracticable even for small values of $n$ and $k$.

We are actually interested in creating a set of networks $N^n_k$ that does not include all possible networks but contains only "relevant" elements.
\begin{definition}
A \emph{complete set of filters} \cite{codish:2014:9-25} is a set $N^n_k$ of comparator networks on $n$ channels and of size $k$, 
satisfying the following properties:
\begin{itemize}
\item [(a)] If $s_n = k$ then $N^n_k$ contains at least one sorting network of size $k$.
\item [(b)] If $k < s_n = k'$ then $\exists C^{opt}_{n,k'}$ an optimal-size sorting network 
  and $\exists C_{n,k} \in N^n_k$ such that $C$ is a prefix of $C^{opt}$.
\end{itemize}
\end{definition}
Since the existence of $N^n_k$ is guaranteed by the fact that $R^n_k$ is actually a complete set of filters,
we are interested in creating such a set that is small enough (can be computed in a "reasonable" amount of time).

\section{Subsumption}        
In order to create a complete set of filters in \cite{codish:2014:9-25} it is introduced the relation of {\it subsumption}.
\begin{definition} \label{def:sub}
Let $C_a$ and $C_b$ be comparator networks on $n$ channels. 
If there exists a permutation $\pi$ on $\{1,\dots,n\}$ such that $\pi(outputs(C_a)) \subseteq outputs(C_b)$ we say that $C_a$ \emph{subsumes} $C_b$,
and we write $C_a \preceq C_b$ (or $C_a \le_{\pi} C_b$ to indicate the permutation).
\end{definition}

For example, consider the networks $C_a=(0,1);(1,2);(0,3)$ and \\ $C_b=(0,1);(0,2);(1,3)$.
Their output sets are: \\
$ouputs(C_a)=\{\{0000\},\{0001,0010\},\{0011,0110\},\{0111,1011\},\{1111\}\}$, \\
$ouputs(C_b)=\{\{0000\},\{0001,0010\},\{0011,0101\},\{0111,1011\},\{1111\}\}$. \\
It is easy to verify that $\pi=(0,1,3,2)$ has the property that $C_a \le_{\pi} C_b$.

\begin{proposition} \label{prop:same-size}
Let $C_a$ and $C_b$ be comparator networks on $n$ channels, having $|outputs(C_a)|=|outputs(C_b)|$.
Then, $C_a \preceq C_b \Leftrightarrow C_b \preceq C_a$.
\end{proposition}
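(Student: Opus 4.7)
The plan is to exploit two elementary facts: first, that applying a permutation $\pi$ to a set of binary sequences is a bijection on $\{0,1\}^n$ and therefore preserves cardinality; and second, that for finite sets of equal cardinality, inclusion is equivalent to equality. Once these are in place, inverting the witnessing permutation yields the reverse subsumption immediately.

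Concretely, I would argue only the forward direction $C_a \preceq C_b \Rightarrow C_b \preceq C_a$, since the hypothesis on cardinalities is symmetric in $C_a$ and $C_b$. Assume $C_a \le_\pi C_b$, so that $\pi(\mathit{outputs}(C_a)) \subseteq \mathit{outputs}(C_b)$. Because $\pi$ permutes the $n$ coordinates, the induced map $\overline{x} \mapsto \pi(\overline{x})$ is a bijection on $\{0,1\}^n$, hence injective on $\mathit{outputs}(C_a)$, giving $|\pi(\mathit{outputs}(C_a))| = |\mathit{outputs}(C_a)|$. Combining this with the hypothesis $|\mathit{outputs}(C_a)| = |\mathit{outputs}(C_b)|$ and the inclusion above, we get a finite set contained in another of the same cardinality, so
\[
\pi(\mathit{outputs}(C_a)) = \mathit{outputs}(C_b).
\]
Applying $\pi^{-1}$ (which is again a permutation on $\{1,\dots,n\}$) to both sides yields $\mathit{outputs}(C_a) = \pi^{-1}(\mathit{outputs}(C_b))$, in particular $\pi^{-1}(\mathit{outputs}(C_b)) \subseteq \mathit{outputs}(C_a)$, which is exactly $C_b \le_{\pi^{-1}} C_a$. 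The converse direction is the same argument with the roles of $C_a$ and $C_b$ swapped.

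I do not anticipate a real obstacle here; the statement is essentially the fact that an injection between finite sets of the same size is a bijection, dressed up in the language of permutations and output sets. The only point worth being careful about is verifying that $\pi$ acts as a bijection on the ambient space $\{0,1\}^n$ (so that cardinality is preserved when we restrict it to $\mathit{outputs}(C_a)$), and that $\pi^{-1}$ is a legitimate permutation that may be used as the witness for the reverse subsumption in the sense of Definition~\ref{def:sub}.
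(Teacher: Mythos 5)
Your proof is correct and follows essentially the same route as the paper's: use the equal cardinalities to upgrade the inclusion $\pi(outputs(C_a)) \subseteq outputs(C_b)$ to an equality, then take $\pi^{-1}$ as the witness for the reverse subsumption. You are slightly more explicit than the paper in justifying that $\pi$ preserves cardinality, but the argument is the same.
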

\begin{proof}
Assume that $C_a \le_{\pi} C_b \Rightarrow \pi(outputs(C_a)) \subseteq outputs(C_b)$
and since $|outputs(C_a)|=|outputs(C_b)| \Rightarrow \pi(outputs(C_a)) = outputs(C_b)$. 
That means that $\pi$ is actually mapping each sequence in $outputs(C_a)$ to a distinct sequence in $outputs(C_b)$.
The inverse permutation $\pi^{-1}$ is also a mapping, this time from $outputs(C_b)$ to $outputs(C_a)$, 
implying that $\pi^{-1}(outputs(C_b)) = outputs(C_a) \Rightarrow C_b \le_{\pi^{-1}} C_a$.
\end{proof}

The following result is the key to creating a complete set of filters:
\begin{lemma}
Let $C_a$ and $C_b$ be comparator networks on $n$ channels, both having the same size, and $C_a \preceq C_b$.
Then, if there exists a sorting network $C_b;C$ of size $k$, there also exists a sorting network $C_a;C'$ of size $k$.
\end{lemma}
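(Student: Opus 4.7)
The strategy has two parts: first, encode the content of the subsumption as a generalized comparator network of size $k$ that sorts every input; then, use Knuth's untangling theorem to extract a standard sorting network of size $k$ with $C_a$ as its literal prefix.

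For the first part, the subsumption $C_a \preceq C_b$ gives a permutation $\pi$ with $\pi(\mathrm{outputs}(C_a)) \subseteq \mathrm{outputs}(C_b)$. Consequently, for every input $\overline{x}$, the sequence $\pi(C_a(\overline{x}))$ lies in $\mathrm{outputs}(C_b)$, so $\pi(C_a(\overline{x})) = C_b(\overline{x}')$ for some $\overline{x}'$; since $C_b;C$ sorts by hypothesis, $C(\pi(C_a(\overline{x}))) = (C_b;C)(\overline{x}')$ is sorted. Thus the composite operation ``apply $C_a$, then permute by $\pi$, then apply $C$'' already sorts every input. Now consider the generalized comparator network $G = \pi^{-1}(C_a);\,C$, where $\pi^{-1}(C_a)$ is obtained by replacing each $(i,j)$ in $C_a$ with the (possibly wrong-way) generalized comparator $(\pi^{-1}(i),\pi^{-1}(j))$; it has size $|C_a|+|C|=k$. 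Using the identity $\pi^{-1}(C_a)(\overline{x})=\pi(C_a(\pi^{-1}(\overline{x})))$, one obtains $G(\overline{x}) = C(\pi(C_a(\pi^{-1}(\overline{x}))))$; substituting $\overline{x}' = \pi^{-1}(\overline{x})$ (which ranges over all inputs as $\overline{x}$ does) and invoking the previous observation shows $G(\overline{x})$ is sorted for every $\overline{x}$. Hence $G$ is a generalized sorting network of size $k$.

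For the second part, Knuth's untangling theorem (Exercise 5.3.4.16 of \cite{knuth:1998:acp3}) converts $G$ into a standard sorting network $N$ of size $k$. The delicate remaining step --- and the main obstacle --- is to show that $N$ can be arranged in the form $C_a;C'$ with $|C'|=|C|$. Since the generalized portion $\pi^{-1}(C_a)$ of $G$ sits at the front, untangling its ``wrong-way'' comparators introduces a sequence of wire-swaps that must then be propagated forward into the subsequent $C$. The key claim to verify is that the composition of these accumulated swaps equals precisely $\pi$ at the boundary between the two portions, so that the first $|C_a|$ comparators of $N$ coincide with those of $C_a$ and the remaining $|C|$ comparators make up the desired $C'$. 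Carrying out this bookkeeping by induction on the wrong-way flips is the technical heart of the proof; once established, $C_a;C'$ is the required sorting network of size $k$.
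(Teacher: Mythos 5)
The paper does not prove this lemma itself; it cites Lemma~2 of \cite{codish:2014:9-25} and Lemma~7 of \cite{bundala:2013:optimal}, so your attempt has to be measured against that argument. Your first half is correct: from $\pi(outputs(C_a))\subseteq outputs(C_b)$ you correctly conclude that ``apply $C_a$, then $\pi$, then $C$'' sorts every input, and the identity $\pi^{-1}(C_a)(\overline{x})=\pi(C_a(\pi^{-1}(\overline{x})))$ does make $G=\pi^{-1}(C_a);C$ a generalized sorting network of size $k$. The gap is in the second half, and it is not a routine bookkeeping step that you may defer: you have placed the generalized (permuted) comparators in the \emph{prefix}, which is exactly where they must not be. Your ``key claim'' --- that untangling $\pi^{-1}(C_a)$ accumulates swaps composing to $\pi$ at the boundary, so that the first $|C_a|$ comparators of the untangled network coincide with $C_a$ --- is false in general. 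The untangling procedure of exercise 5.3.4.16 only performs a wire-swap when it meets a wrong-way comparator $(i,j)$ with $i>j$; if $\pi^{-1}(C_a)$ happens to be standard (e.g.\ $C_a=(2,3)$ on three channels and $\pi^{-1}$ maps $2\mapsto 1$, $3\mapsto 2$), or if $\pi$ moves channels that $C_a$ never touches, no swaps are triggered at all and the untangled prefix is $\pi^{-1}(C_a)\neq C_a$. You would then obtain a sorting network whose prefix is merely \emph{equivalent} to $C_a$, which is not what the lemma (or the algorithm that uses it) requires.

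The proof in the cited references avoids this by pushing the permutation into the \emph{suffix}: one forms the generalized network $C_a;\pi(C)$, whose prefix is the untouched, already-standard $C_a$. On any input this network produces a fixed relabeling of the sorted sequence, so $|outputs(C_a;\pi(C))|=n+1$; since untangling scans left to right and only begins swapping at the first wrong-way comparator, it leaves $C_a$ verbatim and rewrites only the suffix into some standard $C'$ with $|C'|=|C|$, preserving the output count. A standard network with $n+1$ outputs is a sorting network by Proposition~\ref{prop1}(b), so $C_a;C'$ is the required sorting network of size $k$. If you restructure your argument around this decomposition, the ``delicate remaining step'' you flagged disappears entirely.
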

The proof of the lemma is presented in \cite{codish:2014:9-25} (Lemma 2) and \cite{bundala:2013:optimal} (Lemma 7).

The previous lemma "suggests" that when creating the set of networks $R^n_k$ using the naive approach, 
and having the goal of creating actually a complete set of filters, we should not add two networks in this set if one of them subsumes the other.

\vspace{5mm}
\underline{The algorithm to generate $N^n_k$}
\begin{algorithmic}
\REQUIRE $n,k \in \mathbb{Z}^+$
\ENSURE Returns $N^n_k$, a complete set of filters
\STATE $N^n_0 = \{C_{n,0}\}$
\COMMENT{Start with the empty network} 
\FORALL{$p = 1\dots k$}
  \STATE $N^n_p = \emptyset$
  \COMMENT{Generate $N^n_p$ from $N^n_{p-1}$, adding all possible comparators} 
  \FORALL{$C \in N^{n}_{p-1}$}
    \FORALL{$i = 1 \dots n-1$, $j = i+1 \dots n$}
	  \IF{the comparator $(i,j)$ is redundant}
	    \STATE {\bf continue}
	  \ENDIF
      \STATE $C^*=C;(i,j)$ 
      \COMMENT{Create a new network $C^*$}
      \IF{$\not\exists C' \in N^n_p$ such that $C' \preceq C^*$}      
		\STATE $N^n_p = N^n_p \cup C^*$		
		\STATE Remove from $N^n_p$ all the networks $C''$ such that $C^* \preceq C''$.
      \ENDIF
    \ENDFOR
  \ENDFOR
\ENDFOR
\STATE {\bf return} $N^n_k$
\end{algorithmic}

A comparator $c$ is {\it redundant} relative to the network $C$ if adding it at the end of $C$ does not modify the output set:
$outputs(C;c) = outputs(C)$.
Testing if a comparator $c=(i,j)$ is redundant relative to a network $C$ can be easily implemented by inspecting the values $x_i$ and $x_j$ 
in all the sequences $\overline{x} \in outputs(C)$. If $x_i \le x_j$ for all the sequences then $c$ is redundant.

The key aspect in implementing the algorithm above is the test for subsumption.

\section{Subsumption testing}
Let $C_a$ and $C_b$ be comparator networks on $n$ channels. According to definition \ref{def:sub}, in order to check if $C_a$ subsumes $C_b$ 
we must find a permutation $\pi$ on $\{1,\dots,n\}$ such that $\pi(outputs(C_a)) \subseteq outputs(C_b)$.
If no such permutation exists then $C_a$ does not subsume $C_b$.

In order to avoid iterating through all $n!$ permutations, 
in \cite{codish:2014:9-25} several results are presented that identify situations when subsumption testing can be implemented efficiently. 
We enumerate them as the tests $ST_1$ to $ST_4$.

{\bf $(ST_1)$ Check the total size of the output} \\
If $|outputs(C_a)| > |outputs(C_b)|$ then $C_a$ cannot subsume $C_b$.

{\bf $(ST_2)$ Check the size of corresponding clusters} (Lemma 4 in \cite{codish:2014:9-25}) \\
If there exists $0\le p \le n$ such that $|cluster(C_a,p)| > |cluster(C_b, p)|$ then $C_a$ cannot subsume $C_b$. 
When applying a permutation $\pi$ on a sequence in $outputs(C_a)$, the number of bits set to $1$ remains the same, only their positions change.
So, if $\pi(outputs(C_a))\subseteq outputs(C_b)$ then $\forall 0\le p \le n\:$ $\pi(cluster(C_a),p) \subseteq cluster(C_b,p)$, 
which implies that $|cluster(C_a)| = |\pi(cluster(C_a),p)| \le$ \\ $|cluster(C_b,p)|$  for all $0\le p \le n$.

{\bf $(ST_3)$ Check the ones and zeros} (Lemma 5 in \cite{codish:2014:9-25})\\
Recall that $zeros$ and $ones$ represent the sets of positions that are set to $0$, respectively to $1$.
If there exists $0\le p \le n$ such that $|zeros(C_a,p)| > |zeros(C_b,p)|$ or $|ones(C_a,p)| > |ones(C_b,p)|$ then $C_a$ cannot subsume $C_b$. \\
For example, consider the networks $C_a=(0,1);(2,3);(1,3);(0,4);(0,2)$ and $C_b=(0,1);(2,3);(0,2);(2,4);(0,2)$.
$cluster(C_a, 2)=\{0011,00110,01010\}$, $cluster(C_b, 2)=\{00011,01001,01010\}$,
$ones(C_a, 2)=\{2,3,4,5\}$, $ones(C_b, 2)=\{2,4,5\}$, therefore $C_a \not\preceq C_b$.

{\bf $(ST_4)$ Check all permutations} (Lemma 6 in \cite{codish:2014:9-25}) \\
The final optimization presented in \cite{codish:2014:9-25} states that 
if there exists a permutation $\pi$ such that $\pi(outputs(C_a)) \subseteq outputs(C_b)$ then
$\forall 0\le p \le n$ $zeros(\pi(C_a,p)) \subseteq zeros(C_b,p)$ and $ones(\pi(C_a, p)) \subseteq ones(C_b, p)$. 
So, before checking the inclusion for the whole output sets, we should check the inclusion for the $zeros$ and $ones$ sets, 
which is computationally cheaper.

The tests $(ST_1)$ to $(ST_3)$ are very easy to check and are highly effective in reducing the search space.
However, if none of them can be applied, we have to enumerate the whole set of $n!$ permutations, verify $(ST_4)$ and eventually
the definition of subsumption, for each one of them.
In \cite{codish:2014:9-25} the authors focused on $n=9$ which means verifying $362,880$ permutations for each subsumption test.
They were successful in creating all sets of complete filters $N^9_k$ for $k=1,\dots,25$ and actually proved that $s_9=25$.
Using a powerful computer and running a parallel implementation of the algorithm on $288$ threads, 
the time necessary for creating these sets was measured in days (more than five days only for $N^9_{14}$).

Moving from $9!$ to $10!=3,628,800$ or $11!=39,916,800$ does not seem feasible.
We have to take in consideration also the size of the complete filter sets, for example $|N^9_{14}|=914,444$.

We present a new approach for testing subsumption, which greatly reduces the number of permutations which must be taken into consideration.
Instead of enumerating all permutations we will enumerate all perfect matchings in a bipartite graph created for the networks $C_a$ and $C_b$ being tested.

\section{Enumerating perfect matchings} 
\begin{definition}
Let $C_a$ and $C_b$ be comparator networks on $n$ channels.
The \emph{subsumption graph} $G(C_a,C_b)$ is defined as the bipartite graph $(A, B; E(G))$ 
with vertex set $V(G)=A \cup B$,  where $A=B=\{1,\dots,n\}$ and the edge set $E(G)$ defined as follows.
Any edge $e\in E(G)$ is a 2-set $e=\{i,j\}$ with $i\in A$ and $j\in B$ (also written as $e=ij$) having the properties:
\begin{itemize}
\item $i \in zeros(C_a,p) \Rightarrow j \in zeros(C_b,p)$, $\forall 0 \le p \le n$;
\item $i \in ones(C_a,p)  \Rightarrow j \in ones(C_b,p)$,  $\forall 0 \le p \le n$.
\end{itemize}
\end{definition}
So, the edges of the subsumption graph $G$ represent a relationship between positions in the two output sets of $C_a$ and $C_b$. 
An edge $ij$ signifies that the position $i$ (regarding the sequences in $outputs(C_a)$) and the position $j$ (regarding $C_b$) 
are "compatible", meaning that a permutation $\pi$ with the property $\pi(outputs(C_a)) \subseteq outputs(C_b)$ might have the mapping $i$ to $j$ as a part of it.


As an example, consider the following $zeros$ and $ones$ sequences, 
corresponding to $C_a=(0,1);(2,3);(1,3);(1,4)$ and $C_b=(0,1);(2,3);(0,3);(1,4)$.
$\overline{zeros}(C_a)=\{${\tt 00000,00000,000-0,000--,000--,-----}$\}$,\\
$\overline{zeros}(C_b)=\{${\tt 00000,00000,00000,000--,000--,-----}$\}$,\\
$\overline{ones}(C_a)=\{${\tt -----,---11,1-111,11111,11111,11111}$\}$,\\
$\overline{ones}(C_b)=\{${\tt -----,---11,-1111,11111,11111,11111}$\}$.

The subsumption graph $G(C_a,C_b)$ is pictured below:
\begin{figure}[!ht]
\centering
\captionsetup{justification=centering}
\begin{tikzpicture}
\begin{scope}
   \node[v_node] (a1) at (-2,1) {$1$};
   \node[v_node] (a2) at (-1,1) {$2$};
   \node[v_node] (a3) at (0,1) {$3$};
   \node[v_node] (a4) at (1,1) {$4$};
   \node[v_node] (a5) at (2,1) {$5$};

   \node[v_node] (b1) at (-2,-1) {$1$};
   \node[v_node] (b2) at (-1,-1) {$2$};
   \node[v_node] (b3) at (0,-1) {$3$};
   \node[v_node] (b4) at (1,-1) {$4$};
   \node[v_node] (b5) at (2,-1) {$5$};
   
 	\draw[font=\tiny] (a1)--(b2);
 	\draw[font=\tiny] (a1)--(b3);
 	\draw[font=\tiny] (a2)--(b1);
 	\draw[font=\tiny] (a2)--(b2);
 	\draw[font=\tiny] (a2)--(b3);
 	\draw[font=\tiny] (a3)--(b2);
 	\draw[font=\tiny] (a3)--(b3);
 	\draw[font=\tiny] (a4)--(b4);
 	\draw[font=\tiny] (a4)--(b5);
 	\draw[font=\tiny] (a5)--(b4);
 	\draw[font=\tiny] (a5)--(b5);
\end{scope}
\end{tikzpicture}
\caption{The subsumption graph corresponding to the comparator networks $C_a=(0,1);(2,3);(1,3);(1,4)$ and $C_b=(0,1);(2,3);(0,3);(1,4)$}
\label{fig:graph}
\end{figure}

A {\it matching} $M$ in the graph  $G$ is a set of independent edges (no two edges in the matching share a common node).
If $ij \in M $ we say that $i$ and $j$ are {\it saturated}. 
A {\it perfect matching} is a matching that saturates all vertices of the graph.

\begin{lemma}
Let $C_a$ and $C_b$ be comparator networks on $n$ channels. 
If $C_a \le_{\pi} C_b$ then $\pi$ represents a perfect matching in the subsumption graph $G(C_a,C_b)$.
\end{lemma}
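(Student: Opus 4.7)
The plan is to exhibit a concrete set of $n$ edges in $G(C_a, C_b)$ determined by $\pi$, and then argue that this set is a perfect matching.

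First, I take as candidate the set $M = \{ ij : i \in A,\ j = \pi^{-1}(i) \in B \}$. Because $\pi^{-1}$ is a bijection on $\{1,\dots,n\}$, $M$ has exactly one edge incident to each vertex of $A \cup B$, so it is automatically a perfect matching in the complete bipartite graph on $A \cup B$. The substantive task is therefore only to verify that every pair in $M$ actually lies in $E(G(C_a,C_b))$, i.e.\ satisfies the two implications in the definition of the subsumption graph.

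To check this, fix $i \in A$, put $j = \pi^{-1}(i)$, and suppose $i \in zeros(C_a, p)$. Then some $\overline{x} \in cluster(C_a, p)$ has $x_i = 0$. Since $\pi$ only permutes entries, $\pi(\overline{x})$ has the same number of ones as $\overline{x}$; combined with $C_a \le_{\pi} C_b$, this yields $\pi(\overline{x}) \in cluster(C_b, p)$. Unfolding the paper's convention $\pi(\overline{x}) = (x_{\pi(1)}, \dots, x_{\pi(n)})$, the $j$-th coordinate of $\pi(\overline{x})$ is $x_{\pi(j)} = x_{\pi(\pi^{-1}(i))} = x_i = 0$, whence $j \in zeros(C_b, p)$. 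The analogous calculation with ``zero'' replaced by ``one'' gives the second implication, so $ij \in E(G(C_a, C_b))$. Doing this for every $i \in A$ shows $M \subseteq E(G(C_a, C_b))$, completing the proof.

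I expect the only subtle point to be the direction of the bijection: the paper's convention moves the value sitting at position $\pi(j)$ to position $j$, so the value at position $i$ in $\overline{x}$ ends up at position $\pi^{-1}(i)$ in $\pi(\overline{x})$. One must therefore take the graph of $\pi^{-1}$ (not of $\pi$) as the candidate matching, so that the cancellation $\pi(\pi^{-1}(i)) = i$ collapses the calculation correctly. Once this orientation is fixed, every remaining step is a direct unfolding of the definitions of $cluster$, $zeros/ones$, and the subsumption graph, with no real obstacle.
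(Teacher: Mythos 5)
Your proof is correct and follows essentially the same argument as the paper's: fix a position $i$ with a $0$ (resp.\ $1$) in some sequence of $cluster(C_a,p)$, push that sequence through $\pi$ into $cluster(C_b,p)$, and read off the bit at the matched position — the paper merely phrases this as a contradiction rather than a direct verification. Your care over the direction of the bijection is justified: under the paper's stated convention $\pi(\overline{x})=(x_{\pi(1)},\dots,x_{\pi(n)})$ the partner of $i\in A$ is indeed $\pi^{-1}(i)$, whereas the paper's proof writes $\pi(i)=j$ and implicitly uses the opposite convention; this discrepancy is harmless for the lemma and for Corollary~\ref{cor}, since the graph of $\pi^{-1}$ is a perfect matching exactly when that of $\pi$ is.
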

\begin{proof}
Suppose that $C_a \le_{\pi} C_b$, $\pi(i)=j$ and $ij \not\in E(G)$. 
That means that $\exists 0\le p \le n$ such that $i \in zeros(C_a,p) \wedge j \not\in zeros(C_b,p)$
or $i \in ones(C_a,p) \wedge j \not\in ones(C_b,p)$. We will asumme the first case.
Let $\overline{x}$ a sequence in $cluster(C_a,p)$ such that $\overline{x}(i)=0$. 
Since $\pi(outputs(C_a)) \subseteq outputs(C_b) \Rightarrow \pi(\overline{x}) \in cluster(C_b, p)$.
But $\pi(i)=j$, therefore in $cluster(C_b, p)$ there is the sequence $\pi(\overline{x})$ having the bit at position $j$ equal to $0$,
contradiction.
\end{proof}

The previous lemma leads to the following result:
\begin{corollary} \label{cor}
Let $C_a$ and $C_b$ be comparator networks on $n$ channels.
Then $C_a$ subsumes $C_b$ if and only if there exists a perfect matching $\pi$ in the subsumption graph $G(C_a,C_b)$.
\end{corollary}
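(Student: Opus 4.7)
My plan is to derive the corollary directly from the preceding lemma and Definition \ref{def:sub}, using the standard identification between a perfect matching in the balanced bipartite graph $G(C_a, C_b)$ and a permutation $\pi$ on $\{1,\ldots,n\}$: a perfect matching consists of $n$ independent edges $\{i\,j_i\}$ saturating both parts, which is precisely the graph of a bijection $\pi(i) = j_i$ from $A$ to $B$. Once this dictionary is fixed, the corollary becomes essentially a graph-theoretic rephrasing of the lemma.

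For the forward implication ($\Rightarrow$), I would start from $C_a \preceq C_b$ and use Definition \ref{def:sub} to extract a permutation $\pi$ with $\pi(outputs(C_a)) \subseteq outputs(C_b)$, so in particular $C_a \le_\pi C_b$. The preceding lemma then asserts that this $\pi$, viewed as the set of edges $\{i\,\pi(i) : 1 \le i \le n\}$, lies entirely in $E(G)$; combined with the bijectivity of $\pi$, this exactly says that the edges form a perfect matching of $G(C_a, C_b)$.

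For the backward implication ($\Leftarrow$), a perfect matching $\pi$ in $G(C_a, C_b)$ that witnesses $\pi(outputs(C_a)) \subseteq outputs(C_b)$ immediately yields $C_a \preceq C_b$ by Definition \ref{def:sub}. The main content of the corollary is therefore not mathematical depth but its algorithmic payoff: instead of iterating over all $n!$ permutations inside the brute-force test $(ST_4)$, one may restrict the search to perfect matchings of $G(C_a, C_b)$, a set that is typically orders of magnitude smaller. I do not foresee any significant obstacle beyond keeping the matching-to-permutation correspondence straight and remembering that, in the algorithm, each candidate perfect matching will still need to be checked against the sequence-level inclusion of Definition \ref{def:sub} before subsumption is declared.
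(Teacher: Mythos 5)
Your proposal is correct and follows the same (implicit) route as the paper, which offers no separate proof but simply reads the corollary off the preceding lemma: the forward direction is the lemma applied to a witnessing permutation $\pi$ with $\pi(outputs(C_a)) \subseteq outputs(C_b)$, and the backward direction is Definition \ref{def:sub} applied to a perfect matching read as a bijection $A \to B$. You also correctly pinpoint the one subtlety the paper glosses over, namely that the mere existence of a perfect matching is only a necessary condition, so the backward implication must be read as ``there exists a perfect matching $\pi$ that additionally satisfies $\pi(outputs(C_a)) \subseteq outputs(C_b)$'' --- exactly the reading the paper's own algorithm uses when it still verifies each of the enumerated matchings against the definition of subsumption.
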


The graph in figure \ref{fig:graph} has only four perfect matchings: $(2,1,3,4,5)$, $(3,1,2,4,5)$, $(2,1,3,5,4)$, $(3,1,2,5,4)$.
So, when testing subsumption, instead of verifying $5!=120$ permutations it is enough to verify only $4$ of them.

If two clusters are of the same size, then we can strengthen the previous result even more.
If there is a permutation $\pi$ such that $\pi(cluster(C_a, p)) = cluster(C_b, p)$ then $\pi^{-1}(cluster(C_b, p) = cluster(C_a, p)$.
Using the same reasoning, when creating the subsumption graph $C(G_a,C_b)$ we add the following two condition when defining an edge $ij$:
\begin{itemize}
\item $j \in zeros(C_b,p) \Rightarrow i \in zeros(C_a,p)$, $\forall 0 \le p \le n$ such that $|cluster(C_a,p)|=|cluster(C_b,p)|$,
\item $j \in ones(C_b,p) \Rightarrow i \in ones(C_a,p)$, $\forall 0 \le p \le n$ such that $|cluster(C_a,p)|=|cluster(C_b,p)|$.
\end{itemize}

In order to enumerate all perfect matchings in a bipartite graph, we have implemented the algorithm described in \cite{uno:1997}.
The algorithm starts with finding a perfect matching in the subsumption graph $G(C_a,C_b)$. 
Taking into consideration the small size of the bipartite graph, we have chosen the Ford-Fulkerson algorithm which is very simple and does not
require elaborate data structures. Its time complexity is $O(n|E(G)|)$.
If no perfect matching exists, then we have established that $C_a$ does not subsume $C_b$.
Otherwise, the algorithm presented in \cite{uno:1997} identifies all other perfect matchings, taking only $O(n)$ time per matching.

\section{Experimental results} 

We implemented both variants of subsumption testing: 
\begin{itemize}
\item $(1)$ enumerating all permutations and checking the inclusions described by $(ST_4)$ before verifying the actual definition of subsumption;
\item $(2)$ verifying only the permutations that are actually perfect matchings in the subsumption graph, according to Corollary \ref{cor}.
\end{itemize}


We made some simple experiments on a regular computer (Intel i7-4700HQ @2.40GHz), using $8$ concurrent threads.
The programming platform was Java SE Development Kit 8.

Several suggestive results are presented in the table below:
{\footnotesize
\begin{center}
\begin{tabular}[t]{|l|r|r|r|r|r|r|r|}
\hline
$(n,k)$				&	$|N^n_k|$	& $total$		& $sub$		& $perm_1$			&	$time_1$	&	$perm_2$	&	$time_2$	\\ \hline
$(7,9)$  			& 	$678$		& $1,223,426$	& $5,144$	& $26,505,101$		&	$2.88$		&	$33,120$	&	$0.07$		\\ \hline
$(7,10)$  			& 	$510$		& $878,995$		& $5,728$	& $25,363,033$		&	$2.82$		&	$24,362$	&	$0.06$		\\ \hline
$(8,7)$  			& 	$648$		& $980,765$		& $2,939$	& $105,863,506$		&	$13.67$		&	$49,142$	&	$0.14$		\\ \hline
$(8,8)$  			& 	$2088$		& $9,117,107$	& $9,381$	& $738,053,686$		&	$94.50$		&	$283,614$	&	$0.49$		\\ \hline
$(8,9)$  			& 	$5703$		& $24,511,628$	& $29,104$	& $4,974,612,498$	&	$650.22$	&	$1,303,340$	&	$1.96$		\\ \hline
\end{tabular}
\end{center}
}

The columns of the table have the following significations:
\begin{itemize}
\item $(n,k)$ - $n$ is the number of channels, $k$ is the number of comparators;
\item $|N^n_k|$ - the size of the complete set of filters generated for the given $n$ and $k$;
\item $total$ - the total number of subsumption checks;
\item $sub$ - the number of subsumptions that were identified;
\item $perm_1$ - how many permutations were checked, using the variant $(1)$;
\item $time_1$ - the total time, measured in seconds, using the variant $(1)$;
\item $perm_2$ - how many permutations were checked, using the variant $(2)$;
\item $time_2$ - the total time, measured in seconds, using the variant $(2)$;
\end{itemize}

As we can see from this results, using the variant $(2)$ the number of permutations that were verified in order to establish subsumption is greatly reduced.
Despite the fact that it is necessary to create the subsumption graph and to iterate through its set of perfect matchings, 
this leads to a much shorter time needed for the overall generation of the complete set of filters.

This new approach enabled us to reproduce the state-of-the-art result concerning optimal-size sorting networks, described in \cite{codish:2014:9-25}.
Using an Intel Xeon E5-2670 @ 2.60GHz computer, with a total of 32 cores,
we generated all the complete set of filters for $n=9$. The results are presented in the table below.

{\footnotesize
\begin{center}
\begin{tabular}[t]{|l|r|r|r|r|r|r|r|r|r|}
\hline
$k$			&	$1$	& $2$	& $3$	& $4$	& $5$	& $6$	& $7$	& $8$		&	\\ \hline
$|N^{9}_k|$	&	$1$	& $3$	& $7$	& $20$	& $59$	& $208$	& $807$	& $3415$	&	\\ \hline
$time(s)$	&	$0$	& $0$	& $0$	& $0$	& $0$	& $0$	& $0$	& $0$		&	\\ \hline \hline
$k$			&	$9$		& $10$		& $11$		& $12$		& $13$		&	$14$	& $15$		& $16$		& \\ \hline 
$|N^{9}_k|$	&	$14343$	& $55991$	& $188730$	& $490322$	& $854638$	&	$914444$& $607164$	& $274212$	& \\ \hline
$time(s)$	&	$4$		& $48$		& $769$		& $6688$	& $25186$	&	$40896$	& $24161$	& $5511$	& \\ \hline \hline
$k$			&	$17$	& $18$		& $19$		& $20$		& $21$	& $22$	& $23$	& $24$	& $25$ 	\\ \hline
$|N^{9}_k|$	&	$94085$	& $25786$	& $5699$	& $1107$	& $250$	& $73$	& $27$	& $8$	& $1$ 	\\ \hline
$time(s)$	&	$610$	& $36	$	& $2$		& $0$		& $0$	& $0$	& $0$	& $0$	& $0$ 	\\ \hline
\end{tabular}
\end{center}
}

In \cite{codish:2014:9-25} the necessary time required to compute $|N^{9}_{14}|$ using the generate-and-prune approach 
was estimated at more than $5$ days of computation on $288$ threads. 
Their tests were performed on a cluster with a total of 144 Intel E8400 cores clocked at 3 GHz.
In our experiments, the same set was created in only $11$ hours, which is actually a significant improvement.

\section{Acknowledgments}
We would like to thank Michael Codish for introducing us to this research topic and Cornelius Croitoru for his valuable comments. 
Furthermore, we thank Mihai Rotaru for providing us with the computational resources to run our experiments.

\section{Conclusions}
In this paper we have extended the work in \cite{codish:2014:9-25}, further investigating the relation of subsumption.
In order to determine the minimal number of comparators needed to sort any input of a given length, a systematic BFS-like algorithm
generates incrementally complete sets of filters, that is sets of comparator networks that have the potential to prefix an optimal-size sorting network.
To make this approach feasible it is essential to avoid adding into these sets networks that subsume one another.
Testing the subsumption is an expensive operation, invoked a huge number of times during the execution of the algorithm.
We described a new approach to implement this test, based on enumerating perfect matchings in a bipartite graph, called the subsumption graph.
Computer experiments have shown significant improvements, greatly reducing the number of invocations and the overall running time.
The results show that, using appropriate hardware, it might be possible to approach in this manner the optimal-size problem 
for sorting networks with more than $10$ channels.

\bibliographystyle{plain}
\bibliography{sorting-networks}

\end{document}